\theoremstyle{plain}
\newtheorem{thm}{Theorem}
\newtheorem{lemma}{Lemma}
\theoremstyle{definition}
\theoremstyle{remark}
\newtheorem{result}{Result}
\begin{document}

\title{Minimizing Instability in Strategy-Proof Matching Mechanism Using A Linear Programming Approach\thanks{I am deeply grateful to my advisor, Michihiro Kandori, for his invaluable guidance and continuous support. I also thank Fuhito Kojima, Keita Kuwahara and Kento Hashimoto for their helpful and insightful comments.}}
\author{Tohya Sugano\thanks{Graduate School of Economics, The University of Tokyo. Email: sugano-tohya1011@g.ecc.u-tokyo.ac.jp} \\ The University of Tokyo, Japan}
\date{}
\maketitle

\begin{abstract}
We study the design of one-to-one matching mechanisms that are \emph{strategy-proof for both sides} and \emph{as stable as possible}. 
Motivated by the impossibility result of \citet{Roth1982-cl}, we formulate the mechanism design problem as a linear program that minimizes stability violations subject to exact strategy-proofness constraints. 
We consider both an average-case objective (summing violations over all preference profiles) and a worst-case objective (minimizing the maximum violation across profiles), and we show that imposing anonymity and symmetry when the number of agents in both sides are the same can be done without loss of optimality. 
Computationally, for small markets our approach yields randomized mechanisms with substantially lower stability violations than randomized sequential dictatorship (RSD); in the $3\times 3$ case the optimum reduces average instability to roughly one third of RSD. 
For deterministic mechanisms with three students and three schools, we find that any two-sided strategy-proof mechanism has at least two blocking pairs in the worst case and we provide a simple algorithm that attains this bound. 
Finally, we propose an extension to larger markets and present simulation evidence that, relative to sequential dictatorship (SD), it reduces the number of blocking pairs by about $0.25$ on average.
\end{abstract}

\section{Introduction}
Matching theory is a core area of economic design with deep connections to computer science and mathematics and with major applications such as kidney exchange and residency matching. 
In the classic one-to-one setting, \citet{Gale1962-mm} showed that a stable matching always exists and that deferred acceptance (DA) produces a stable matching while being strategy-proof for one side of the market. 
However, \citet{Roth1982-cl} proved a sharp impossibility: no mechanism can be simultaneously stable and strategy-proof for \emph{both} sides.

Given this impossibility, a natural design objective is \emph{second-best}: among mechanisms that are exactly two-sided strategy-proof, find those that are as stable as possible. 
While strategy-proof matching mechanisms have been studied extensively, the systematic design of two-sided strategy-proof mechanisms that minimize instability---especially beyond sequential dictatorship---remains limited.

This paper takes a computational optimization approach. 
We write the design problem as a linear program whose decision variables encode the matching probabilities assigned by the mechanism at each preference profile. 
The objective penalizes stability violations (measured by the fractional-stability slack), and the constraints enforce strategy-proofness on both sides. 
We additionally incorporate anonymity and symmetry, properties that are often regarded as socially important, and we show that doing so does not worsen the achievable instability.

Our contributions are as follows. 
First, we provide optimized randomized mechanisms for small markets that substantially reduce instability relative to RSD. 
Second, in the deterministic $3\times 3$ case we characterize the best possible worst-case performance under two-sided strategy-proofness and give an optimal deterministic algorithm. 
Finally, we propose a generalization to larger markets and show via simulation that it improves upon SD in terms of the number of blocking pairs.

\section{Related Work}
The one-to-one matching model was formalized by \citet{Gale1962-mm}, who established the existence of stable matchings and introduced deferred acceptance. 
\citet{Roth1982-cl} later showed that stability and two-sided strategy-proofness cannot be achieved simultaneously.

Beyond DA, a number of prominent mechanisms trade off incentives and stability/efficiency. 
In school choice, \citet{Abdulkadiroglu2003-ve} introduced top trading cycles (TTC), which is Pareto efficient but not two-sided strategy-proof in the two-sided matching setting. 
The Boston mechanism has also been studied; it is generally neither stable nor strategy-proof, though it can behave well under strong preference restrictions that are often viewed as unrealistic \citep{Kumano2013-pl}.

A separate line of work formulates stable matchings through linear programming. 
\citet{Roth1993-yk} characterized stable matchings as vertices of a polytope and related randomized stability notions to convex combinations of stable deterministic matchings. 
More recently, \citet{Ravindranath2021-ew} explored the incentive--stability frontier using neural networks, but such methods do not provide mechanisms that satisfy \emph{exact} strategy-proofness or guarantee optimality. 
Our contribution differs by providing an explicit linear programming formulation that enforces exact strategy-proofness and directly minimizes stability violations.

\section{Preliminaries}
Let $S=\{1,\dots,n\}$ denote the set of students and $C=\{1,\dots,m\}$ the set of schools. 
Let $P_S$ be the set of all linear orders on $C\cup\{\emptyset\}$, and let $P_C$ be the set of all linear orders on $S\cup\{\emptyset\}$. 
Each student $s\in S$ has a strict preference $\succ_s\in P_S$ over $C\cup\{\emptyset\}$, and each school $c\in C$ has a strict preference $\succ_c\in P_C$ over $S\cup\{\emptyset\}$. 
We write $\succ_S=(\succ_s)_{s\in S}$ for the student profile, $\succ_C=(\succ_c)_{c\in C}$ for the school profile, and $\succ=(\succ_i)_{i\in S\cup C}$ for the full preference profile. 
Let $P$ denote the set of all possible profiles. 
For any agent $i\in S\cup C$, let $\succ_{-i}$ denote the preferences of all agents except $i$.

A \textbf{randomized matching} is a matrix
\[
[r(s,c)]_{s\in S\cup\{\emptyset\},\, c\in C\cup\{\emptyset\}}
\]
satisfying:
\begin{enumerate}
  \item[(i)] $r(s,c)\ge 0$ for all $s\in S\cup\{\emptyset\}$ and $c\in C\cup\{\emptyset\}$;
  \item[(ii)] $\sum_{c\in C\cup\{\emptyset\}} r(s,c)=1$ for all $s\in S$;
  \item[(iii)] $\sum_{s\in S\cup\{\emptyset\}} r(s,c)=1$ for all $c\in C$.
\end{enumerate}
Let $M$ be the set of all randomized matchings. 
If $r(s,c)\in\{0,1\}$ for every $(s,c)\in S\times C$, we call $r$ \textbf{deterministic}.

By the Birkhoff--von Neumann theorem \citep{Birkhoff1946-wt,Von-Neumann1953-lh}, every randomized matching can be represented as a convex combination of deterministic matchings.

A function $g:P_S^n\times P_C^m\to M$ is called a \textbf{randomized matching mechanism}. 
If $g$ always returns a deterministic matching, we call it a \textbf{deterministic matching mechanism}.

\subsection{Stability}
A randomized matching $r$ is \textbf{ex-ante stable} if there does not exist a pair $(s,c)\in S\times C$ such that $s$ and $c$ would both prefer to be matched with each other, \emph{given the support of $r$}. 
Formally, $r$ is ex-ante stable if there do not exist $s\in S$, $c\in C$, and agents $s'\in S$, $c'\in C$ such that
\[
c \succ_s c',\quad s \succ_c s',\quad r(s,c')>0,\quad r(s',c)>0.
\]
In that case, we call $(s,c)$ a \textbf{blocking pair}.

A randomized matching $r$ is \textbf{ex-post stable} if it can be decomposed into stable deterministic matchings. 
Moreover, $r$ is \textbf{fractionally stable} if for every $(s,c)\in S\times C$ it holds that
\begin{equation}
  r(s,c)+\sum_{c'\in C:c'\succ_s c}r(s,c')+\sum_{s'\in S:s'\succ_c s}r(s',c) \ge 1.
  \label{eq:partial-stability}
\end{equation}
For deterministic matchings, ex-ante and ex-post stability coincide; we then simply say that the matching is \textbf{stable}. 
For randomized matchings, ex-post stability and fractional stability are equivalent \citep[see, e.g.,][]{Aziz2019-ms}.

A deterministic matching mechanism $g$ is \textbf{stable} if $g(\succ)$ is stable for every $\succ\in P$. 
Similarly, a randomized mechanism is \textbf{fractionally stable} if $g(\succ)$ satisfies \eqref{eq:partial-stability} for every $\succ\in P$.

For a given randomized matching $r$, define the \textbf{stability violation} for a pair $(s,c)$ as
\begin{equation}
  \max\Bigl\{ 1 - r(s,c) - \sum_{c'\in C: c'\succ_s c}r(s,c') - \sum_{s'\in S:s'\succ_c s}r(s',c),\,0 \Bigr\},
  \label{eq:stab-violation0}
\end{equation}
and the stability violation for a matching as
\begin{equation}
    \sum_{(s,c)\in S\times C}\max\Bigl\{ 1 - r(s,c) - \sum_{c'\in C: c'\succ_s c}r(s,c') - \sum_{s'\in S:s'\succ_c s}r(s',c),\,0 \Bigr\}.
    \label{eq:stab-violation}
  \end{equation}
In the deterministic case, \eqref{eq:stab-violation} coincides with the number of blocking pairs.

\subsection{Strategy-Proofness}
We use the standard (ordinal) notion of strategy-proofness for randomized assignments based on first-order stochastic dominance. 
A randomized matching mechanism $g$ is \textbf{strategy-proof} if for every student $s\in S$, every profile $\succ\in P$, every misreport $\succ'_s\in P_S$, and every school $c\in C$ with $c\succ_s\emptyset$, it holds that
\begin{equation}
\sum_{c'\in C: c' \succeq_s c} g(\succ)(s,c') \;\ge\;
\sum_{c'\in C: c' \succeq_s c} g(\succ'_s,\succ_{-s})(s,c').
\label{eq:sp_student}
\end{equation}
An analogous condition is imposed when interchanging the roles of students and schools.

\subsection{Anonymity}
Let $\pi_S:S\to S$ and $\pi_C:C\to C$ be permutations of students and schools, and let $\Pi_S$ and $\Pi_C$ denote the sets of all such permutations. 
Given a school preference order $\succ_c$ and a permutation $\pi_S$, define $\rho_C(\succ_c,\pi_S)$ as the order obtained by renaming each student $s$ as $\pi_S(s)$. 
Similarly, given a student preference order $\succ_s$ and a permutation $\pi_C$, define $\rho_S(\succ_s,\pi_C)$.

For example, if $n=3$, $(\pi_S(1),\pi_S(2),\pi_S(3))=(3,1,2)$ and $3\succ_c 2\succ_c 1$, then writing $\succ'_c=\rho_C(\succ_c,\pi_S)$ yields $2\succ'_c 1\succ'_c 3$.

Write $\rho_C(\succ_C,\pi_S)$ for the profile obtained by applying $\rho_C(\cdot,\pi_S)$ to each school's preference, and similarly for $\rho_S(\succ_S,\pi_C)$. 
A mechanism $g$ is \textbf{anonymous} if for every $\succ\in P$, every $\pi_S\in\Pi_S$, $\pi_C\in\Pi_C$, and every $s\in S$, $c\in C$, letting
\[
\succ'_S=\rho_S(\succ_S,\pi_C)\quad\text{and}\quad \succ'_C=\rho_C(\succ_C,\pi_S),
\]
we have
\begin{equation}
g(\succ)(s,c) = g\Bigl((\succ'_{\pi_S(s)})_{s\in S},\,(\succ'_{\pi_C(c)})_{c\in C}\Bigr)\bigl(\pi_S(s),\pi_C(c)\bigr).
\label{eq:anonymity}
\end{equation}
Intuitively, renaming agents does not affect outcomes.

\subsection{Symmetry}
When $n=m$, a mechanism may also be required to treat both sides of the market symmetrically. 
Fix $\succ\in P$. 
For each $i\in\{1,\dots,n\}$, construct a profile $\succ^\leftrightarrow$ by swapping the roles of students and schools: whenever in $\succ$ student $i$ ranks school $j$ above $j'$, in $\succ^\leftrightarrow$ school $i$ ranks student $j$ above $j'$, and analogously for the other side. 
A mechanism $g$ is \textbf{symmetric} if for every $(i,j)\in\{1,\dots,n\}^2$,
\begin{equation}
g(\succ)(i,j) = g(\succ^\leftrightarrow)(j,i).
\label{eq:symmetry}
\end{equation}

\subsection{Non-wastefulness}
A randomized matching mechanism $g$ is \textbf{non-wasteful} if, for every $\succ\in P$ and every pair $(s,c)\in S\times C$ with $s\succ_c\emptyset$ and $c\succ_s\emptyset$, either the student is fully matched (i.e., $\sum_{c'\in C} g(\succ)(s,c')=1$) or the school is fully matched (i.e., $\sum_{s'\in S} g(\succ)(s',c)=1$). 
Equivalently, the mechanism does not leave mutually acceptable capacity unused.

\subsection{Individual Rationality}
A randomized matching mechanism $g$ is \textbf{individually rational} if for every $\succ\in P$ and every $(s,c)\in S\times C$, whenever $g(\succ)(s,c)>0$ it follows that $s\succ_c\emptyset$ and $c\succ_s\emptyset$.

\section{The Optimization Problem for Matching Problems}
In the remainder of the paper we assume that every agent strictly prefers being matched to any partner rather than remaining unmatched:
\[
\forall s\in S,\;\forall c\in C:\quad c\succ_s \emptyset,\quad s\succ_c \emptyset.
\]
Under this assumption, any unmatched probability mass represents ``waste''.

Our goal is to design a two-sided strategy-proof mechanism that minimizes stability violations. 
We evaluate a mechanism by the total violation \eqref{eq:stab-violation} at each profile and consider two objectives:
\begin{itemize}
  \item \textbf{Objective A (average-case):} minimize the sum of violations over all profiles $\succ\in P$;
  \item \textbf{Objective B (worst-case):} minimize the maximum violation over all profiles $\succ\in P$.
\end{itemize}
We first state the average-case formulation.

\begin{equationarray}{llr}
    \underset{g}{\text{minimize}} & \multicolumn{2}{l}{\displaystyle \sum_{\succ \in P} \sum_{(s,c)\in S\times C} 
    \max\left\{ 1 - g(\succ)(s, c) - \sum_{c' \in C : c' \succ_s c} g(\succ)(s, c') 
    - \sum_{s' \in S : s' \succ_c s} g(\succ)(s', c), 0 \right\}} \\[2mm]

    \text{subject to}
    & 0 \leq g(\succ)(s, c) \leq 1 
    & \forall \succ \in P, \forall s \in S, \forall c \in C \label{const1}\\

    & \displaystyle \sum_{s \in S} g(\succ)(s, c) \leq 1 
    & \forall \succ \in P, \forall c \in C \label{const2}\\

    & \displaystyle \sum_{c \in C} g(\succ)(s, c) \leq 1 
    & \forall \succ \in P, \forall s \in S \label{const3}\\

    & \displaystyle \sum_{c' \in C : c' \succeq_s c} g(\succ)(s, c') \geq \sum_{c' \in C : c' \succeq_s c} 
    g(\succ'_s, \succ_{-s})(s, c') 
    & \forall \succ \in P, \forall \succ'_s \in P_S, \forall c \in C  \label{const4}\\ 

    & \displaystyle \sum_{s' \in S : s' \succeq_c s} g(\succ)(s', c) \geq \sum_{s' \in S : s' \succeq_c s} 
    g(\succ'_c, \succ_{-c})(s', c) 
    & \forall \succ \in P, \forall \succ'_c \in P_C, \forall s \in S \label{const5}
  \end{equationarray}

Constraints \eqref{const1}--\eqref{const3} ensure that $g$ is a feasible (possibly wasteful) randomized matching rule, while \eqref{const4} and \eqref{const5} encode two-sided strategy-proofness. 
(If one replaces \eqref{const1} by $g(\succ)(s,c)\in\{0,1\}$, the mechanism is forced to be deterministic.)
After a standard linearization of the $\max\{\cdot,0\}$ terms, the objective becomes linear and the feasible region is compact; therefore an optimal solution exists.

For the worst-case objective we replace the objective function by
\begin{equation}
  \max_{\succ\in P}\; \sum_{(s,c)\in S\times C}
  \max\Biggl\{ 1 - g(\succ)(s, c) - \sum_{c' \in C : c' \succ_s c} g(\succ)(s, c') - \sum_{s' \in S : s' \succ_c s} g(\succ)(s', c),\; 0 \Biggr\}.
\end{equation}
This criterion evaluates the mechanism independently of any distribution over preference profiles.

A direct formulation quickly becomes large: even when $n=m=3$, the number of variables is $(3!)^6\cdot 3^2=419{,}904$. 
To reduce computation and improve interpretability, we rely on two symmetrization arguments.

\begin{thm}
\label{thm1}
For any two-sided strategy-proof randomized matching mechanism $f$, there exists a two-sided strategy-proof and anonymous randomized matching mechanism $g$ such that, for both objective functions $A$ and $B$, the objective value of $g$ is no larger than that of $f$.
\end{thm}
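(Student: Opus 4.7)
The plan is to apply the classical group-averaging (symmetrization) construction over the joint action of $\Pi_S\times\Pi_C$. Writing $\succ^\pi$ for the profile obtained from $\succ$ by relabeling agents via $\pi=(\pi_S,\pi_C)$ (i.e., using $\rho_S(\cdot,\pi_C)$ on the student side and $\rho_C(\cdot,\pi_S)$ on the school side), I would define
\[
g(\succ)(s,c)=\frac{1}{|\Pi_S|\,|\Pi_C|}\sum_{\pi_S,\pi_C} f(\succ^\pi)\bigl(\pi_S(s),\pi_C(c)\bigr).
\]
Feasibility is routine: each summand lies in $[0,1]$, and since $\pi_S,\pi_C$ are bijections the row and column sums of $g(\succ)$ are convex combinations of those of $f(\succ^\pi)$. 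Anonymity follows from a reindexing $\tau=\pi\circ\sigma$ that shows both sides of \eqref{eq:anonymity} equal the same group average.

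For strategy-proofness, the key equivariance is that $\pi_C(c')\succeq'_{\pi_S(s)}\pi_C(c)$ iff $c'\succeq_s c$ (where $\succ'$ is the $\pi$-relabeled preference). Applying the SP inequality \eqref{const4} for $f$ at the profile $\succ^\pi$ for student $\pi_S(s)$ against the $\pi_C$-relabeling of the misreport $\succ'_s$, and substituting $c''=\pi_C(c')$, yields
\[
\sum_{c'\succeq_s c} f(\succ^\pi)\bigl(\pi_S(s),\pi_C(c')\bigr)\;\ge\;\sum_{c'\succeq_s c} f\bigl(((\succ'_s,\succ_{-s}))^\pi\bigr)\bigl(\pi_S(s),\pi_C(c')\bigr).
\]
Averaging this inequality over $\pi$ gives exactly \eqref{const4} for $g$; the school-side constraint \eqref{const5} is symmetric.

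The core of the argument is the objective comparison, which is pure convexity. Let $V(\succ,s,c)$ denote the unthresholded expression inside the max in \eqref{eq:stab-violation0}. The same equivariance identity transports $V$ under the group action, so by construction $V_g(\succ,s,c)=\tfrac{1}{|\Pi|}\sum_\pi V_f(\succ^\pi,\pi_S(s),\pi_C(c))$. Since $\max(\cdot,0)$ is convex, Jensen's inequality gives
\[
\max\{V_g(\succ,s,c),0\}\;\le\;\frac{1}{|\Pi|}\sum_\pi \max\bigl\{V_f(\succ^\pi,\pi_S(s),\pi_C(c)),0\bigr\}.
\]
Summing over $(s,c)$ and using that $(s,c)\mapsto(\pi_S(s),\pi_C(c))$ is a bijection on $S\times C$ bounds the total violation of $g$ at $\succ$ by the $\pi$-average of the total violations of $f$ at $\succ^\pi$. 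For Objective~A, summing over $\succ$ and using that $\succ\mapsto\succ^\pi$ is a bijection on $P$ collapses the average to $\sum_\succ V_f^{\mathrm{tot}}(\succ)$. For Objective~B, each term in the $\pi$-average is bounded by $\max_{\succ'} V_f^{\mathrm{tot}}(\succ')$, so the worst-case value also does not increase.

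I expect the main obstacle to be purely notational: one must precisely track how the pair $(\pi_S,\pi_C)$ acts on the two-sided preference profile so that upper-contour sets and single-pair violation terms transport correctly under $c''=\pi_C(c')$ and $s''=\pi_S(s')$. Once this equivariance is pinned down, the rest reduces to convexity of $\max(\cdot,0)$ and two elementary bijections (on $S\times C$ and on $P$).
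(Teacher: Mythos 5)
Your proposal is correct and follows essentially the same route as the paper's Appendix~\ref{app:thm1} proof: average $f$ over all relabelings $(\pi_S,\pi_C)$, verify anonymity by composing permutations, transport the strategy-proofness inequalities via the equivariance of upper-contour sets, and bound the objective using convexity of $\max\{\cdot,0\}$ together with the bijections on $S\times C$ and on $P$. Your normalization by $|\Pi_S|\,|\Pi_C|$ (rather than the paper's $|A(\succ)|$) is in fact the cleaner choice when a profile has nontrivial stabilizer, but the argument is otherwise identical.
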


Take any two-sided strategy-proof mechanism $f$ and construct $g$ by averaging the outcomes of $f$ over all relabelings of students and schools (i.e., over all permutations of agent names).
Anonymity holds by construction, since the distribution over relabelings is uniform.

Two-sided strategy-proofness is preserved because every strategy-proofness constraint is a linear inequality in the assignment probabilities; averaging linear inequalities keeps them valid.
Finally, the stability-violation objective at each profile is a sum of terms of the form $\max\{\text{affine function of }g(\succ),0\}$, hence it is convex in the assignment probabilities.
Therefore, averaging over relabelings cannot increase the violation for any profile and consequently it cannot increase either the average-case objective $A$ or the worst-case objective $B$.
A formal proof is given in Appendix~\ref{app:thm1}.

\begin{thm}
\label{thm2}
Suppose that $n=m$. For any two-sided strategy-proof randomized matching mechanism $f$, there exists a two-sided strategy-proof and symmetric randomized matching mechanism $g$ whose objective value (for both $A$ and $B$) is no larger than that of $f$.
\end{thm}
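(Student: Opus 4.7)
The plan is to mimic the symmetrization argument of Theorem~\ref{thm1}, but to average over the two-element group generated by swapping the two sides of the market rather than over permutations of agent names. Concretely, given a two-sided strategy-proof mechanism $f$, I would define
\[
g(\succ)(s,c) \;=\; \tfrac{1}{2}\bigl[f(\succ)(s,c) + f(\succ^\leftrightarrow)(c,s)\bigr],
\]
where the identification $S=C=\{1,\dots,n\}$ makes $f(\succ^\leftrightarrow)(c,s)$ well defined as the probability that the ``student'' with label $c$ is matched to the ``school'' with label $s$ under $f$ on the swapped profile. Symmetry of $g$ is immediate from the involution $(\succ^\leftrightarrow)^\leftrightarrow = \succ$, which interchanges the two terms in the average. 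Since $[f(\succ^\leftrightarrow)(c,s)]_{s,c}$ is the transpose of a doubly stochastic matrix (hence itself doubly stochastic when $n=m$), $g$ is also a feasible randomized matching.

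For two-sided strategy-proofness of $g$, the argument is linearity combined with the strategy-proofness of $f$ on both sides. The student-SP constraint for $g$ at $(\succ,\succ'_s)$ splits into an $f$ student-SP constraint at $\succ$ (for the first term in the average) and, after noting that $(\succ'_s,\succ_{-s})^\leftrightarrow$ coincides with the swapped profile in which school $s$ misreports $(\succ'_s)^\leftrightarrow$, an $f$ school-SP constraint at $\succ^\leftrightarrow$ (for the second term). Adding the two inequalities and dividing by two gives the desired constraint for $g$; the school-SP direction is completely analogous.

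Writing $V(r,\succ)$ for the stability violation \eqref{eq:stab-violation} and setting $\tilde f(\succ)(s,c) = f(\succ^\leftrightarrow)(c,s)$, convexity of each $\max\{\cdot,0\}$ term in $r$ yields
\[
V(g(\succ),\succ)\ \le\ \tfrac{1}{2}\bigl[V(f(\succ),\succ) + V(\tilde f(\succ),\succ)\bigr].
\]
The key identity is $V(\tilde f(\succ),\succ) = V(f(\succ^\leftrightarrow),\succ^\leftrightarrow)$: reindexing the outer sum by $(\hat s,\hat c)=(c,s)$ and using that $c' \succ_s c$ iff $c' \succ^\leftrightarrow_s c$ (with the analogous equivalence on the school side) matches the swapped violation summand-by-summand. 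For Objective~A, summing over $\succ\in P$ and using that $\succ\mapsto\succ^\leftrightarrow$ is a bijection on $P$ leaves the bound at $\sum_\succ V(f(\succ),\succ)$; for Objective~B, the maximum of an average of two quantities is at most the maximum of each, giving $\max_\succ V(g(\succ),\succ)\le \max_\succ V(f(\succ),\succ)$.

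The main obstacle is purely notational bookkeeping: carefully verifying that the swap $\succ\mapsto\succ^\leftrightarrow$ converts the student-SP constraint of $g$ into a school-SP constraint of $f$ at the swapped profile, and that the violation summand for $(s,c)$ under $(\tilde f(\succ),\succ)$ matches the one for $(c,s)$ under $(f(\succ^\leftrightarrow),\succ^\leftrightarrow)$. Once these translations are written out explicitly, the remaining steps are a routine application of linearity and convexity, mirroring the proof of Theorem~\ref{thm1}.
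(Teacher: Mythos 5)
Your proposal is correct and follows essentially the same route as the paper's proof: averaging $f(\succ)$ with the transposed outcome of $f$ on the role-swapped profile, verifying symmetry via the involution, splitting the student-SP constraint of $g$ into a student-SP constraint of $f$ at $\succ$ plus a school-SP constraint of $f$ at $\succ^\leftrightarrow$, and using convexity of the $\max\{\cdot,0\}$ terms together with the bijection $\succ\mapsto\succ^\leftrightarrow$ for both objectives. No substantive differences to report.
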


Enforce symmetry by averaging $f(\succ)$ with the role-swapped outcome: given a profile $\succ$, consider the profile obtained by swapping the roles of students and schools, and transpose the resulting matching.
Define $g$ as the pointwise average of these two matchings; then $g$ satisfies the symmetry identity by construction.

Two-sided strategy-proofness remains valid because the incentive constraints are linear and are preserved under averaging.
Moreover, the stability-violation objective is convex in assignment probabilities (a sum of maxima of affine functions), so averaging cannot increase either the profile-wise violation or its maximum across profiles.
A formal proof is given in Appendix~\ref{app:thm2}.

\section{Results}
We report computational results for the balanced cases $n=m=2$ and $n=m=3$.

\subsection{The Case $n=m=2$}
With two students and two schools, there exists a deterministic mechanism that is both two-sided strategy-proof and stable. 

\begin{lstlisting}[frame=single, caption={Deterministic Matching for $n=m=2$}, label={alg1}]
Input: Preference profile $\succ$
Output: Matching $r$

Set $r(s,c)=0$ for each $(s,c) \in S \times C$.
If there exists $(s,c)\in S\times C$ such that 
    student $s$ and school $c$ are each other's top choice,
then
    set $r(s,c) \leftarrow 1$ and $r(s',c') \leftarrow 1$, where $s'$ and $c'$ are the remaining agents.
Else
    let $c \in C$ be the school most preferred by student 1;
    set $r(1,c) \leftarrow 1$ and match the remaining student and school.
End if.
Return $r$.
\end{lstlisting}

Algorithm~\ref{alg1} matches a mutually top-ranked pair whenever it exists; otherwise it gives student~1 her top choice and matches the remaining pair. 
For $n=m=2$ this coincides with student-proposing DA.

\begin{lemma}
Algorithm~\ref{alg1} is two-sided strategy-proof and stable.
\end{lemma}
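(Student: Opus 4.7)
The plan is to establish stability and two-sided strategy-proofness separately, with both arguments driven by a short case split on whether a mutually top pair exists at the given profile.

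For stability, consider the two branches of Algorithm~\ref{alg1}. If some pair $(s,c)$ is mutually top, then after matching $(s,c)$ and the complementary pair $(s',c')$ the only unmatched pairs are $(s,c')$ and $(s',c)$; the former cannot block because $c$ is $s$'s top, and the latter cannot block because $s$ is $c$'s top. In the other branch no mutually top pair exists, student $1$ is matched to her top $c$, and student $2$ is matched to the remaining school $c'$; the only candidate blocking pair is $(2,c)$, and for it to block we would need $c\succ_2 c'$ and $2\succ_c 1$, which would make $(2,c)$ mutually top---contradicting the hypothesis of this branch. Hence the output is always stable.

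For strategy-proofness I would first formalize the remark already made after Algorithm~\ref{alg1}: for $n=m=2$ the algorithm coincides with student-proposing deferred acceptance. The verification is a short case split on whether the two students' top choices agree: either both students propose to distinct schools (and both proposals are accepted in one round) or they compete for the same school (which admits its preferred applicant, and the rejected student has only one remaining option). In both subcases the DA outcome coincides with the algorithm's output, in the second subcase matching exactly the mutually-top branch. Given this equivalence, strategy-proofness on the student side is immediate from the classical theorem that student-proposing DA is dominant-strategy incentive compatible for students.

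The main obstacle is strategy-proofness on the school side, since student-proposing DA is famously \emph{not} school-strategy-proof in general; here we must use the special structure of $n=m=2$. Each school has exactly one non-trivial misreport, namely flipping its order over the two students, so I would enumerate profiles by the two students' top choices. If school $c$ already receives its top student under truthful reporting, there is nothing to check. Otherwise $c$'s top student $s^\ast$ is matched to the other school $c'$, and I would show that flipping $c$'s reported order changes neither which school each student proposes to nor $c'$'s acceptance decision, so the matching---and in particular $c$'s assigned student---is identical after the misreport. Thus no profitable deviation exists on either side, completing the proof.
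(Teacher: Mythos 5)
Your stability argument is correct and complete, and is essentially the paper's (the paper's one-line version, made precise by your observation that in the ``else'' branch a blocking pair $(2,c)$ would itself have to be mutually top). For strategy-proofness you take a genuinely different route: the paper argues directly that in each branch every pivotal agent either gets her top choice or is ``forced,'' and then waves at ``a symmetric argument'' for schools, whereas you reduce the student side to the classical incentive-compatibility of student-proposing deferred acceptance and then handle the school side by an explicit case analysis showing the school's report never changes the outcome unless the school already holds its top student. Your explicit treatment of the school side is an improvement over the paper: Algorithm~\ref{alg1} is \emph{not} symmetric between the two sides (only student~1 is privileged in the ``else'' branch), so the paper's appeal to symmetry does not literally apply, and you are right that school-side strategy-proofness is where the real work is.

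There is, however, one loose end in your route that you should close. Both the claim that a school has ``exactly one non-trivial misreport'' and the reduction to DA quietly assume that reports keep every partner acceptable; but $P_C$ contains all linear orders on $S\cup\{\emptyset\}$, so a school may report a student as unacceptable. Student-proposing DA is manipulable by a school through exactly such a report even when $n=m=2$: at the profile where student $1$ has top $c_1$, student $2$ has top $c_2$, $c_1$ prefers $2$, and $c_2$ prefers $1$, school $c_1$ can declare student $1$ unacceptable, trigger a rejection chain, and end up with student $2$. So the DA equivalence cannot be invoked off the shelf for the school side. The repair is easy and worth stating: Algorithm~\ref{alg1} never consults the position of $\emptyset$ in any report (it only asks for each agent's most preferred partner and performs pairwise comparisons among real agents), so every report is outcome-equivalent to its restriction to $S$ (respectively $C$), which reduces the misreport space back to the single flip you analyze; your case analysis then goes through and in fact shows the stronger fact that no school misreport changes the matching at all.
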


\begin{proof}
If a mutually top-ranked pair exists, all outcomes are fixed and no agent can improve by misreporting. 
Otherwise, student~1 receives her top choice, so she cannot benefit from misreporting; the remaining agents are forced to match. 
A symmetric argument applies to schools. 
Stability is immediate: in either branch, any agent that does not obtain her top choice is blocked by the fact that her top choice is matched to a partner that prefers that match.
\end{proof}

\subsection{The Case $n=m=3$}
\subsubsection{Deterministic Matching Mechanisms}
When restricting attention to deterministic mechanisms, our computational search shows that a two-sided strategy-proof, non-wasteful, and anonymous deterministic mechanism does not exist for $n=m=3$.

\begin{result}
\label{result1}
For $n=m=3$, there is no deterministic matching mechanism that is simultaneously two-sided strategy-proof, non-wasteful, and anonymous.
\end{result}

Result~\ref{result1} indicates that under determinism and two-sided strategy-proofness one must sacrifice anonymity (and hence equal treatment of agents with identical labels).

We also evaluate deterministic mechanisms by worst-case instability (objective function $B$).

\begin{result}
\label{result2}
For $n=m=3$, no two-sided strategy-proof deterministic matching mechanism can attain a worst-case number of blocking pairs less than or equal to $1$.
\end{result}

Equivalently, every two-sided strategy-proof deterministic mechanism must have at least two blocking pairs in the worst case. 
The next algorithm attains this lower bound and is therefore worst-case optimal among deterministic mechanisms.

\begin{lstlisting}[frame=single, caption={Worst-case Optimal Deterministic Matching for $n=m=3$}, label={alg2}]
Input: Preference profile $\succ$
Output: Matching $r$

Set $r(s,c)=0$ for each $(s,c) \in S \times C$.
Let $c \in C$ be the school most preferred by student 1.
Set $r(1,c) \leftarrow 1$; remove student 1 and school $c$.
If there exists $(s,c)$ among the remaining agents such that 
    $c$ is preferred by student $s$ to the remaining school $c'$ and
    $s$ is preferred by school $c$ to the remaining student $s'$,
then
    set $r(s,c) \leftarrow 1$ and $r(s',c') \leftarrow 1$.
Else
    let $c \in C$ be the school most preferred by student 2 among the remaining schools;
    set $r(2,c) \leftarrow 1$ and match student 3 to the remaining school.
End if.
Return $r$.
\end{lstlisting}

\begin{lemma}
Algorithm~\ref{alg2} is two-sided strategy-proof and non-wasteful, and its worst-case number of blocking pairs is $2$.
\end{lemma}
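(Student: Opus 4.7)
The plan is to decompose Algorithm~\ref{alg2} into two stages: stage~1 matches student~1 to her reported top school $c^{*}$, and stage~2 runs Algorithm~\ref{alg1} on the $2\times 2$ submarket consisting of students $\{2,3\}$ and schools $C\setminus\{c^{*}\}$, with student~2 taking the role of the ``privileged'' student in that call. Non-wastefulness is then immediate: both stages return a perfect matching on their respective sides, so the overall output is a perfect matching on $S\times C$ and no mutually acceptable capacity is wasted.

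For two-sided strategy-proofness I would argue agent by agent. Student~1 always receives her reported top, so truth-telling is weakly dominant for her. Students~2 and~3 cannot influence $c^{*}$ (it is determined solely by student~1's report), so any deviation on their part affects only the submarket outcome; because Algorithm~\ref{alg1} is strategy-proof in the $2\times 2$ case by the previous lemma, they have no profitable deviation inside the submarket either. The school $c^{*}$ is mechanically assigned to student~1 regardless of its own report, so it has no profitable deviation. The remaining schools appear only in the submarket stage and inherit strategy-proofness from Algorithm~\ref{alg1}. The cleanest way to package the last two points is the observation that the incentive constraints of Algorithm~\ref{alg2} for every agent other than student~1 coincide with the incentive constraints of Algorithm~\ref{alg1} on the induced submarket preferences.

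For the worst-case bound I would first prove the upper bound by locating the possible blocking pairs. Student~1 receives her top choice and therefore can never be in a blocking pair. Any blocking pair contained in $\{2,3\}\times(C\setminus\{c^{*}\})$ would be a blocking pair of the matching that Algorithm~\ref{alg1} outputs on the submarket, which is stable, so none exists. Hence every blocking pair must have the form $(s,c^{*})$ for some $s\in\{2,3\}$, giving at most two in total. The matching lower bound of two is exactly Result~\ref{result2}. To certify that two is attained, I would exhibit a profile where both student~2 and student~3 rank $c^{*}$ first and $c^{*}$ ranks them both above student~1; then $(2,c^{*})$ and $(3,c^{*})$ are both blocking.

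The main obstacle is showing strategy-proofness for the schools, specifically ruling out that a school in $C\setminus\{c^{*}\}$ could profit from a misreport that causes Algorithm~\ref{alg1} on the submarket to switch branches (mutually top-ranked pair versus the fallback branch). This reduces to the same small case analysis that underlies the strategy-proofness of Algorithm~\ref{alg1} in the $2\times 2$ market, so I would invoke the previous lemma rather than reprove that case check. Everything else is bookkeeping about the fact that $c^{*}$ depends only on student~1's report and that the submarket preferences are just the restrictions of the reported preferences.
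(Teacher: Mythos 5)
Your proposal is correct and follows essentially the same route as the paper's proof: decompose into student~1's assignment plus Algorithm~1 on the residual $2\times 2$ market, inherit strategy-proofness from the earlier lemma, and observe that every blocking pair must involve the school matched to student~1, yielding the bound of two. You are somewhat more explicit than the paper (in justifying why no blocking pair lies inside the submarket and in exhibiting a profile attaining two blocking pairs), but the underlying argument is the same.
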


\begin{proof}
Student~1 always receives her top choice, so she cannot benefit from misreporting. 
Conditional on the first assignment, the remaining step is a deterministic rule on the residual $2\times 2$ market and thus inherits strategy-proofness from Algorithm~\ref{alg1}. 
A symmetric argument applies to schools, so the mechanism is two-sided strategy-proof. 
Non-wastefulness is immediate because the algorithm always produces a perfect matching.

For stability, note first that student~1 (matched to her top choice) cannot be part of any blocking pair. 
Any blocking pair must therefore involve one of the remaining students and the school matched to student~1; at most two such pairs can arise. 
There exist profiles where both remaining students and the school matched to student~1 form blocking pairs, so the worst case is exactly $2$.
\end{proof}

It is useful to compare with sequential dictatorship (SD), which may generate up to four blocking pairs in the worst case when $n=m=3$. 
Moreover, Algorithm~\ref{alg2} extends naturally to larger markets (Section~\ref{general_case}). 
In our computations (Table~\ref{tab2}), Algorithm~\ref{alg2} dominates SD in both average and worst-case stability violation.
Table~\ref{tab2} compares the average and worst-case stability violations (stv) between SD and Algorithm~\ref{alg2}.

\begin{table}[H]
  \centering
  \begin{tabular}{lccc}
  \hline
  \textbf{Mechanism} & \textbf{Average stv} & \textbf{Worst-case stv}  \\ \hline
  SD & 0.6666 & 3.0000 \\ 
  Algorithm~\ref{alg2} & 0.4166 & 2.0000 \\ \hline
  \end{tabular}
  \caption{Comparison of SD and Algorithm~\ref{alg2} for $n=m=3$}
  \label{tab2}
\end{table}

\subsubsection{Randomized Matching Mechanisms}
We also solve the optimization problem for randomized mechanisms under four settings: objective $A$ or $B$, with or without the non-wastefulness constraint. 
We compare the resulting optima with randomized sequential dictatorship (RSD), as well as a symmetrized version of Algorithm~\ref{alg2} (denoted by Algorithm~\ref{alg2}$'$). 

We implement two variants of RSD: 
(1) independently randomizing an ordering on both students and schools (RSD1), and 
(2) randomizing an ordering on one side only (RSD2). 
Both RSD1 and RSD2 satisfy two-sided strategy-proofness, non-wastefulness, anonymity, and symmetry. 
Table~\ref{tab1} reports average stv, worst-case stv, and average waste (computed as $3-\sum_{s\in S}\sum_{c\in C} r(s,c)$).

\begin{table}[H]
  \centering
  \begin{tabular}{lccc}
  \hline
  \textbf{Mechanism} & \textbf{Average stv} & \textbf{Worst-case stv} & \textbf{Average Waste} \\ \hline
  RSD1 & 0.6478 & 1.3333 & 0.0000 \\ 
  RSD2 & 0.6229 & 1.3333 & 0.0000 \\
  Obj.\ $A$, no nonwastefullness constraint & 0.2286 & 0.6224 & 0.0249 \\ 
  Obj.\ $A$, with nonwastefullness constraint & 0.2348 & 0.6455 & 0.0000 \\ 
  Obj.\ $B$, no nonwastefullness constraint & 0.4218 & 0.5000 & 0.0505 \\ 
  Obj.\ $B$, with nonwastefullness constraint & 0.3730 & 0.5000 & 0.0000 \\
  Algorithm~\ref{alg2}$'$ & 0.4063 & 1.6666 & 0.0000 \\ \hline
  \end{tabular}
  \caption{Comparison of optimized randomized mechanisms ($n=m=3$)}
  \label{tab1}
\end{table}

The optimized mechanisms dramatically reduce average instability relative to RSD; under objective $A$ with the non-wastefulness constraint, average instability is roughly one third of RSD.

\subsection{Generalization to $n\ge4$}
\label{general_case}
Algorithm~\ref{alg2} can be extended to larger balanced markets by iteratively assigning the highest-priority remaining agent (from a fixed global ordering) to her top choice among the remaining agents on the other side, until two students and two schools remain; the residual $2\times 2$ problem is then resolved in a way that limits blocking pairs.

\begin{lstlisting}[frame=single, caption={Generalization of Algorithm~\ref{alg2}}, label={alg3}]
Input: Preference profile $\succ$ and a fixed ordering $(i_k)_{k=1}^{2n}$ of the agents in $S\cup C$
Output: Matching $r$

Initialize $r(s,c)=0$ for every $(s,c)\in S\times C$.
Set $k=1$.
While $|S| > 2$ do:
    Let $i=i_k$.
    If $i\notin S\cup C$, then set $k\leftarrow k+1$.
    Else if $i\in S$, then
        Let $j\in C$ be the school most preferred by $i$.
        Set $r(i,j)\leftarrow 1$, remove $i$ from $S$ and $j$ from $C$.
    Else if $i\in C$, then
        Let $j\in S$ be the student most preferred by $i$.
        Set $r(j,i)\leftarrow 1$, remove $i$ from $C$ and $j$ from $S$.
    End if.
    Set $k\leftarrow k+1$.
End while.
If there exists $(s,c)\in S\times C$ satisfying 
    "$c \succ_s c'$ and $s \succ_c s'$" (with $s\neq s'$ and $c\neq c'$)
then
    set $r(s,c)\leftarrow 1$ and $r(s',c')\leftarrow 1$.
Else
    Let $i\in S\cup C$ be the agent highest in the fixed ordering.
    If $i\in S$, then
        Let $j\in C$ be the school most preferred by $i$.
        Let $i'\neq i$ and $j'\neq j$ be the remaining agents.
        Set $r(i,j)\leftarrow 1$ and $r(i',j')\leftarrow 1$.
    Else
        (Symmetric assignment.)
    End if.
End if.
Return $r$.
\end{lstlisting}

When Algorithm~\ref{alg3} and SD use the \emph{same} underlying ordering, Algorithm~\ref{alg3} produces a matching whose number of blocking pairs is no larger than that under SD. 
Intuitively, Algorithm~\ref{alg3} prevents newly matched agents from creating blocking pairs with agents that were matched earlier, while SD can create such conflicts. 
However, once one symmetrizes a mechanism to enforce anonymity and symmetry, the comparison can depend on the symmetrization method (as suggested by Table~\ref{tab1}).

\subsection{Simulation}
We conducted simulations to evaluate the performance of Algorithm~\ref{alg3} and SD as the market size grows. 
For each $n=m\in\{2,\dots,10\}$, we generated 1000 random preference profiles and computed the average stability violation using \eqref{eq:stab-violation}. 
In our implementation both Algorithm~\ref{alg3} and SD used the natural ordering $1,2,\dots,n$ on each side.

Figure~\ref{fig1} shows that for every $n$ in this range, Algorithm~\ref{alg3} yields approximately $0.25$ fewer blocking pairs on average than SD. 
In each instance the difference is either $0$ or $1$, with probability about $0.25$ that Algorithm~\ref{alg3} improves by one blocking pair.

We also compared anonymized and symmetric randomized mechanisms using two symmetrization methods: 
(1) independently randomizing orderings on students and schools, and 
(2) randomizing an ordering on the union $S\cup C$. 
Let Algorithm~\ref{alg3}$'$ denote Algorithm~\ref{alg3} symmetrized by (1) and Algorithm~\ref{alg3}'' denote the version symmetrized by (2). 
Similarly, let RSD1 and RSD2 denote the corresponding variants of RSD. 
Figure~\ref{fig2} reports the simulation results. 
In both comparisons, method (2) yields lower average stability violations, and Algorithm~\ref{alg3} remains consistently about $0.25$ better than SD on average.

\begin{figure}[H]
    \centering
    \begin{minipage}{0.45\linewidth}
        \centering
        \includegraphics[width=\linewidth]{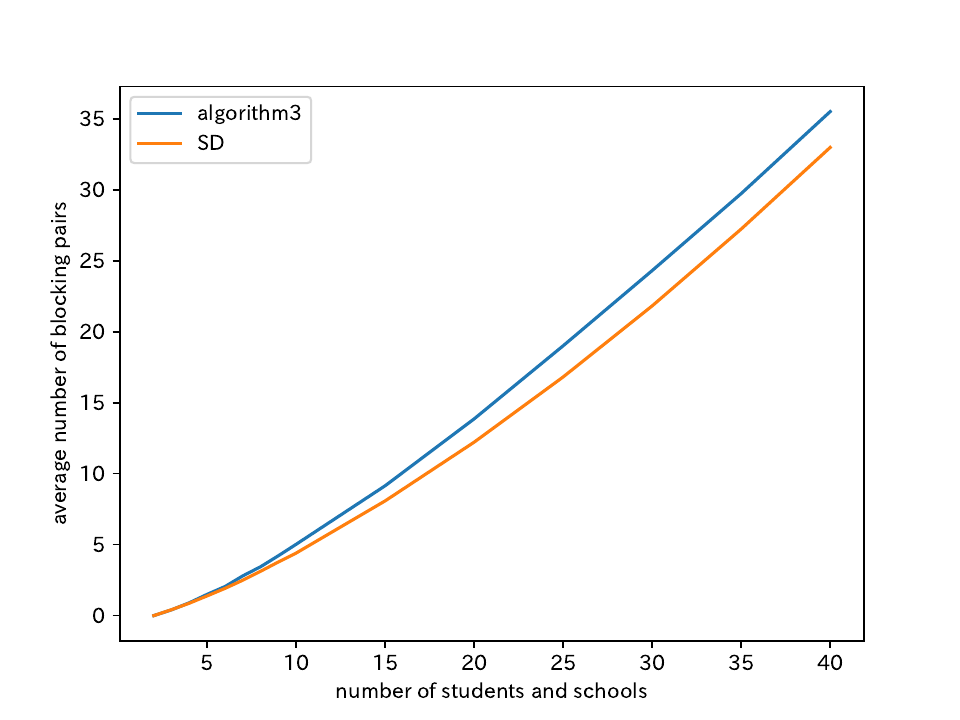}
        \caption{Comparison of Algorithm~\ref{alg3} and SD}
        \label{fig1}
    \end{minipage}
    \hspace{0.02\linewidth}
    \begin{minipage}{0.45\linewidth}
        \centering
        \includegraphics[width=\linewidth]{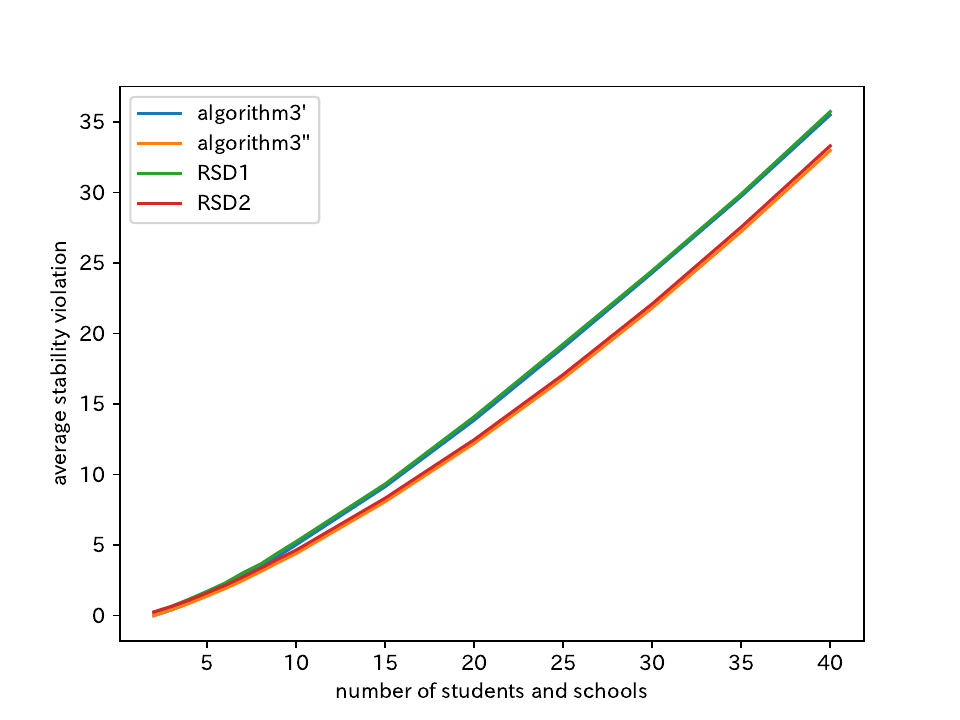}
        \caption{Comparison of anonymized and symmetric mechanisms}
        \label{fig2}
    \end{minipage}
\end{figure}

\section{Conclusion}
We formulated the design of two-sided strategy-proof matching mechanisms as a linear program that minimizes stability violations. 
The formulation allows either an average-case or a worst-case notion of instability, and it can incorporate normative requirements such as anonymity and symmetry. 
We provided symmetrization arguments showing that these requirements can be imposed without worsening achievable instability, which significantly reduces computational complexity.

For $n=m=2$, we presented a deterministic mechanism that is both two-sided strategy-proof and stable. 
For $n=m=3$, we computationally established that any two-sided strategy-proof deterministic mechanism must have at least two blocking pairs in the worst case and provided a simple algorithm achieving this bound. 
For randomized mechanisms in the $3\times 3$ market, we found mechanisms that reduce instability by roughly one third compared to RSD. 
Finally, we proposed a generalization to larger markets and, through simulation, showed that it consistently reduces blocking pairs relative to SD, by about $0.25$ on average.

More broadly, the optimization approach developed here can be applied to other small-scale matching settings (e.g., capacity-constrained matchings or roommate markets), offering a computational perspective on impossibility results by identifying mechanisms that are optimal subject to binding incentive constraints.

\appendix
\section{Appendix: Omitted Proofs}

\subsection{Proof of Theorem~\ref{thm1}}\label{app:thm1}
\begin{proof}
  Let $\succ \in P$ be given. Define $A(\succ)$ to be the set of all outcomes resulting from every possible permutation, that is,
  \[
  A(\succ) = \Bigl\{\succ' \in P : \succ' = \Bigl(\bigl(\succ_{\pi_S(s)}\bigr)_{s\in S},\, \bigl(\succ_{\pi_C(c)}\bigr)_{c\in C}\Bigr) \Bigr\}.
  \]
  Then, for any $\succ, \succ' \in P$, we have that $\succ' \in A(\succ)$ if and only if $\succ \in A(\succ')$. Moreover, if $\succ'$ does not belong to $A(\succ)$ then
  \[
  A(\succ) \cap A(\succ') = \emptyset.
  \]
  Thus, there exist profiles $\succ^1,\dots,\succ^k$ such that for any $i,j\in \{1,\dots,k\}$ with $i \neq j$, we have
  \[
  A(\succ^i) \cap A(\succ^j) = \emptyset,
  \]
  and
  \[
  \bigcup_{i=1}^k A(\succ^i) = P.
  \]
  
  Let $f$ be a strategy-proof probabilistic matching mechanism. We now define a mechanism $g$ as follows: For every $\succ \in P$, every $s \in S$, and every $c \in C$, let
  \begin{equation}
    g(\succ)(s,c) = \frac{1}{|A(\succ)|}\sum_{\pi_S \in \Pi_S,\, \pi_C \in \Pi_C} f\Bigl(\bigl(\rho_S(\succ_{\pi_S(s)},\pi_C)\bigr)_{s\in S},\, \bigl(\rho_C(\succ_{\pi_C(c)},\pi_S)\bigr)_{c\in C}\Bigr)\Bigl(\pi_S(s),\pi_C(c)\Bigr).
  \end{equation}
  In other words, $g$ returns the average matching probability over all permutations of the agents. We now show that $g$ is a strategy-proof and anonymous probabilistic matching mechanism that achieves the same objective value as $f$.
  
  \textbf{(Anonymity)}\quad
  Let $\pi_S, \pi'_S \in \Pi_S$ be given. Then there exists a unique $\pi''_S \in \Pi_S$ such that
  \[
  \pi'_S\bigl(\pi_S(s)\bigr) = \pi''_S(s).
  \]
  Similarly, for any $\pi_C, \pi'_C \in \Pi_C$, there exists a unique $\pi''_C \in \Pi_C$ satisfying
  \[
  \pi'_C\bigl(\pi_C(c)\bigr) = \pi''_C(c).
  \]
  For such $\pi_S, \pi'_S, \pi''_S, \pi_C, \pi'_C,$ and $\pi''_C$, we have
  \[
  \rho_S\Bigl(\rho_S\bigl(\succ_{\pi'_S(\pi_S(s))},\pi_C\bigr),\pi'_C\Bigr) = \rho_S\bigl(\succ_{\pi''_S(s)},\pi''_C\bigr).
  \]
  An analogous property holds when the roles of $S$ and $C$ are interchanged. Therefore, for any $\pi_S, \pi'_S \in \Pi_S$ and $\pi_C, \pi'_C \in \Pi_C$, there exists a unique pair $\pi''_S, \pi''_C$ such that
  \begin{gather*}
    f\left((\rho_S(\rho_S(\succ_{\pi'(\pi(s))},\pi_C),\pi'_C))_{s\in S}, (\rho_C(\rho_C(\succ_{\pi'(\pi(c))},\pi_S),\pi'_S))_{c\in C} \right)(\pi'_S(\pi_S(s)),\pi'_C(\pi_C(c))) \\
    = f\left(\left(\rho_S(\succ_{\pi''_S(s)},\pi''_C)\right)_{s\in S},\left(\rho_C(\succ_{\pi''_C(c)},\pi''_S)\right)_{c\in C}\right)\left(\pi''_S(s),\pi''_C(c)\right)
  \end{gather*}
  Thus, for any $\pi_S, \pi_C$, we have
  \begin{align*}
    &g\left((\rho(\succ_{\pi_S(s)},\pi_C))_{s\in S},(\rho(\succ_{\pi_C(c)},\pi_S))_{c\in C}\right)\left(\pi_S(s),\pi_C(c)\right)\\
    &=\frac{1}{|A(\succ)|}\sum_{\pi'_S\in \Pi_S,\pi'_C\in \Pi_C}f\left((\rho_S(\rho_S(\succ_{\pi'(\pi(s))},\pi_C),\pi'_C))_{s\in S}, (\rho_C(\rho_C(\succ_{\pi'(\pi(c))},\pi_S),\pi'_S))_{c\in C} \right)(\pi'_S(\pi_S(s)),\pi'_C(\pi_C(c)))\\
    &=\frac{1}{|A(\succ)|}\sum_{\pi''_S\in \Pi_S,\pi''_C\in \Pi_C}f\left(\left(\rho_S(\succ_{\pi''_S(s)},\pi''_C)\right)_{s\in S},\left(\rho_C(\succ_{\pi''_C(c)},\pi''_S)\right)_{c\in C}\right)\left(\pi''_S(s),\pi''_C(c)\right)\\
    &=g(\succ)(s,c)
  \end{align*}
  
  \textbf{(Strategy-Proofness)}\quad
  Let $\succ'_s \in P_S$ be an alternative preference for student $s$, and denote by $\succ' = (\succ'_s, \succ_{-s})$. Since $f$ is strategy-proof, for every $\succ'_s \in P_S$ and every $c \in C$ we have
  \[
  \sum_{c' \in C: c' \succeq_s c} f(\succ)(s,c') \ge \sum_{c' \in C: c' \succeq_s c} f(\succ')(s,c').
  \]
  Since this inequality remains valid under any permutation of the agents, it follows that for any $\pi_S, \pi_C$,
  \begin{gather*}
    \sum_{c'\in C: c'\rho(\succeq_{\pi_S(s)},\pi_C) \pi_C(c)}f\left(\left(\rho_S(\succ_{\pi_S(s)},\pi_C)\right)_{s\in S},\left(\rho_C(\succ_{\pi_C(c)},\pi_S)\right)_{c\in C}\right)\\
    \geq \sum_{c'\in C: c'\rho(\succeq'_{\pi_S(s)},\pi_C) \pi_C(c)}f\left(\left(\rho_S(\succ'_{\pi_S(s)},\pi_C)\right)_{s\in S},\left(\rho_C(\succ'_{\pi_C(c)},\pi_S)\right)_{c\in C}\right)(\pi_S(s),c')
  \end{gather*}
  Since $|A(\succ)| = |A(\succ')|$, it follows that for every $\succ \in P$, every $s \in S$, and every $\succ'_s \in P_S$, we have
  \begin{align*}
    &\sum_{c'\in C: c'\succeq_s c}g(\succ)(s,c')\\
    &= \frac{1}{|A(\succ)|}\sum_{c'\in C: c'\succeq_s c}\sum_{\pi_S\in \Pi_S,\pi_C\in \Pi_C}f\left(\left(\rho_S(\succ_{\pi_S(s)},\pi_C)\right)_{s\in S},\left(\rho_C(\succ_{\pi_C(c)},\pi_S)\right)_{c\in C}\right)\left(\pi_S(s),\pi_C(c')\right)\\
    &= \frac{1}{|A(\succ)|}\sum_{\pi_S\in \Pi_S,\pi_C\in \Pi_C}\sum_{c'\in C: c'\rho(\succeq_{\pi_S(s)},\pi_C) \pi_C(c)}f((\rho(\succ_s,\pi_C))_{s\in S},(\rho(\succ_c,\pi_S)))_{c\in C}(\pi_S(s),c') \\
    &\geq \frac{1}{|A(\succ')|}\sum_{\pi_S\in \Pi_S,\pi_C\in \Pi_C}\sum_{c'\in C: c'\rho(\succeq_{\pi_S(s)},\pi_C) \pi_C(c)}f((\rho(\succ'_s,\pi_C))_{s\in S},(\rho(\succ'_c,\pi_S)))_{c\in C}(\pi_S(s),c') \\
    &= \frac{1}{|A(\succ')|}\sum_{c'\in C: c'\succeq_s c}\sum_{\pi_S\in \Pi_S,\pi_C\in \Pi_C}f\left(\left(\rho_S(\succ'_{\pi_S(s)},\pi_C)\right)_{s\in S},\left(\rho_C(\succ'_{\pi_C(c)},\pi_S)\right)_{c\in C}\right)\left(\pi_S(s),\pi_C(c')\right)\\
    &= \sum_{c'\in C: c'\succeq_s c}g(\succ')(s,c')
  \end{align*}
  A similar argument holds when interchanging the roles of $s$ and $c$.
  
  \textbf{(Achievablity)}\quad
  We now show that the objective function value achievable by $f$ (with respect to objective function $A$) is also achievable by $g$. First, observe that
  \begin{align*}
    &1 - g(\succ)(s, c) - \sum_{c' \in C : c' \succ_s c} g(\succ)(s, c') - \sum_{s' \in S : s' \succ_c s} g(\succ)(s', c)\\
    &= 1- \frac{1}{|A(\succ)|}\sum_{\pi_S\in \Pi_S,\pi_C\in \Pi_C}f\left(\left(\rho_S(\succ_{\pi_S(s)},\pi_C)\right)_{s\in S},\left(\rho_C(\succ_{\pi_C(c)},\pi_S)\right)_{c\in C}\right)\left(\pi_S(s),\pi_C(c)\right) \\
    &\quad -\sum_{c' \in C : c' \succ_s c}\frac{1}{|A(\succ)|}\sum_{\pi_S\in \Pi_S,\pi_C\in \Pi_C}f\left(\left(\rho_S(\succ_{\pi_S(s)},\pi_C)\right)_{s\in S},\left(\rho_C(\succ_{\pi_C(c)},\pi_S)\right)_{c\in C}\right)\left(\pi_S(s),\pi_C(c')\right) \\
    &\quad -\sum_{s' \in S : s' \succ_c s}\frac{1}{|A(\succ)|}\sum_{\pi_S\in \Pi_S,\pi_C\in \Pi_C}f\left(\left(\rho_S(\succ_{\pi_S(s)},\pi_C)\right)_{s\in S},\left(\rho_C(\succ_{\pi_C(c)},\pi_S)\right)_{c\in C}\right)\left(\pi_S(s'),\pi_C(c)\right) \\
    &= \frac{1}{|A(\succ)|}\Biggl[\sum_{\pi_S\in \Pi_S,\pi_C\in \Pi_C}\biggl\{1- f\left(\left(\rho_S(\succ_{\pi_S(s)},\pi_C)\right)_{s\in S},\left(\rho_C(\succ_{\pi_C(c)},\pi_S)\right)_{c\in C}\right)\left(\pi_S(s),\pi_C(c)\right)\\ 
    &\quad -\sum_{c' \in C : c'\rho(\succ_{\pi_S(s)},\pi_C) \pi_C(c)}f\left(\left(\rho_S(\succ_{\pi_S(s)},\pi_C)\right)_{s\in S},\left(\rho_C(\succ_{\pi_C(c)},\pi_S)\right)_{c\in C}\right)\left(\pi_S(s),c'\right)\\
    &\quad -\sum_{s' \in S : s'\rho(\succ_{\pi_C(c)},\pi_S) \pi_S(s)}f\left(\left(\rho_S(\succ_{\pi_S(s)},\pi_C)\right)_{s\in S},\left(\rho_C(\succ_{\pi_C(c)},\pi_S)\right)_{c\in C}\right)\left(s',\pi_C(c)\right) \biggr\}\Biggr]
  \end{align*}
  Thus,
  \begin{align*}
    &\max \left\{1 - g(\succ)(s, c) - \sum_{c' \in C : c' \succ_s c} g(\succ)(s, c') - \sum_{s' \in S : s' \succ_c s} g(\succ)(s', c), 0 \right\}\\
    &= \max \Biggl\{\frac{1}{|A(\succ)|}\Biggl[\sum_{\pi_S\in \Pi_S,\pi_C\in \Pi_C}\biggl\{1- f\left(\left(\rho_S(\succ_{\pi_S(s)},\pi_C)\right)_{s\in S},\left(\rho_C(\succ_{\pi_C(c)},\pi_S)\right)_{c\in C}\right)\left(\pi_S(s),\pi_C(c)\right)\\ 
    &\quad -\sum_{c' \in C : c'\rho(\succ_{\pi_S(s)},\pi_C) \pi_C(c)}f\left(\left(\rho_S(\succ_{\pi_S(s)},\pi_C)\right)_{s\in S},\left(\rho_C(\succ_{\pi_C(c)},\pi_S)\right)_{c\in C}\right)\left(\pi_S(s),c'\right)\\
    &\quad -\sum_{s' \in S : s'\rho(\succ_{\pi_C(c)},\pi_S) \pi_S(s)}f\left(\left(\rho_S(\succ_{\pi_S(s)},\pi_C)\right)_{s\in S},\left(\rho_C(\succ_{\pi_C(c)},\pi_S)\right)_{c\in C}\right)\left(s',\pi_C(c)\right) \biggr\}\Biggr], 0 \Biggr\} \\
    &\leq \frac{1}{|A(\succ)|}\sum_{\pi_S\in \Pi_S,\pi_C\in \Pi_C} \max \Biggl\{1- f\left(\left(\rho_S(\succ_{\pi_S(s)},\pi_C)\right)_{s\in S},\left(\rho_C(\succ_{\pi_C(c)},\pi_S)\right)_{c\in C}\right)\left(\pi_S(s),\pi_C(c)\right)\\ 
    &\quad -\sum_{c' \in C : c'\rho(\succ_{\pi_S(s)},\pi_C) \pi_C(c)}f\left(\left(\rho_S(\succ_{\pi_S(s)},\pi_C)\right)_{s\in S},\left(\rho_C(\succ_{\pi_C(c)},\pi_S)\right)_{c\in C}\right)\left(\pi_S(s),c'\right)\\
    &\quad -\sum_{s' \in S : s'\rho(\succ_{\pi_C(c)},\pi_S) \pi_S(s)}f\left(\left(\rho_S(\succ_{\pi_S(s)},\pi_C)\right)_{s\in S},\left(\rho_C(\succ_{\pi_C(c)},\pi_S)\right)_{c\in C}\right)\left(s',\pi_C(c)\right), 0 \Biggr\}
  \end{align*}
  Moreover, for every $\succ \in P$, there is exactly one pair $\pi_S, \pi_C$ such that
  \[
  \succ = \Bigl(\bigl(\rho_S(\succ_{\pi_S(s)},\pi_C)\bigr)_{s\in S},\, \bigl(\rho_C(\succ_{\pi_C(c)},\pi_S)\bigr)_{c\in C}\Bigr).
  \]
  Therefore,
  \begin{align*}
    &\sum_{\succ\in P}\max \left\{1 - g(\succ)(s, c) - \sum_{c' \in C : c' \succ_s c} g(\succ)(s, c') - \sum_{s' \in S : s' \succ_c s} g(\succ)(s', c), 0 \right\}\\
    &= \sum_{\succ\in P}\max \Biggl\{\frac{1}{|A(\succ)|}\Biggl[\sum_{\pi_S\in \Pi_S,\pi_C\in \Pi_C}\biggl\{1- f\left(\left(\rho_S(\succ_{\pi_S(s)},\pi_C)\right)_{s\in S},\left(\rho_C(\succ_{\pi_C(c)},\pi_S)\right)_{c\in C}\right)\left(\pi_S(s),\pi_C(c)\right)\\ 
    &\quad -\sum_{c' \in C : c'\rho(\succ_{\pi_S(s)},\pi_C) \pi_C(c)}f\left(\left(\rho_S(\succ_{\pi_S(s)},\pi_C)\right)_{s\in S},\left(\rho_C(\succ_{\pi_C(c)},\pi_S)\right)_{c\in C}\right)\left(\pi_S(s),c'\right)\\
    &\quad -\sum_{s' \in S : s'\rho(\succ_{\pi_C(c)},\pi_S) \pi_S(s)}f\left(\left(\rho_S(\succ_{\pi_S(s)},\pi_C)\right)_{s\in S},\left(\rho_C(\succ_{\pi_C(c)},\pi_S)\right)_{c\in C}\right)\left(s',\pi_C(c)\right) \biggr\}\Biggr], 0 \Biggr\} \\
    &\leq \frac{1}{|A(\succ)|}\sum_{\succ\in P}\sum_{\pi_S\in \Pi_S,\pi_C\in \Pi_C} \max \Biggl\{1- f\left(\left(\rho_S(\succ_{\pi_S(s)},\pi_C)\right)_{s\in S},\left(\rho_C(\succ_{\pi_C(c)},\pi_S)\right)_{c\in C}\right)\left(\pi_S(s),\pi_C(c)\right)\\ 
    &\quad -\sum_{c' \in C : c'\rho(\succ_{\pi_S(s)},\pi_C) \pi_C(c)}f\left(\left(\rho_S(\succ_{\pi_S(s)},\pi_C)\right)_{s\in S},\left(\rho_C(\succ_{\pi_C(c)},\pi_S)\right)_{c\in C}\right)\left(\pi_S(s),c'\right)\\
    &\quad -\sum_{s' \in S : s'\rho(\succ_{\pi_C(c)},\pi_S) \pi_S(s)}f\left(\left(\rho_S(\succ_{\pi_S(s)},\pi_C)\right)_{s\in S},\left(\rho_C(\succ_{\pi_C(c)},\pi_S)\right)_{c\in C}\right)\left(s',\pi_C(c)\right), 0 \Biggr\} \\
    &= \sum_{\succ\in P}\max \left\{1 - f(\succ)(s, c) - \sum_{c' \in C : c' \succ_s c} f(\succ)(s, c') - \sum_{s' \in S : s' \succ_c s} f(\succ)(s', c), 0 \right\}
  \end{align*}
  Hence,
  \begin{align*}
    &\sum_{\succ\in P}\sum_{s\in S}\sum_{c\in C}\max \left\{1 - g(\succ)(s, c) - \sum_{c' \in C : c' \succ_s c} g(\succ)(s, c') - \sum_{s' \in S : s' \succ_c s} g(\succ)(s', c), 0 \right\}\\
    &\leq \sum_{\succ\in P}\sum_{s\in S}\sum_{c\in C}\max \left\{1 - f(\succ)(s, c) - \sum_{c' \in C : c' \succ_s c} f(\succ)(s, c') - \sum_{s' \in S : s' \succ_c s} f(\succ)(s', c), 0 \right\}
  \end{align*}
  
  Next, we show that the worst-case value of the objective function $B$ achievable by $f$ is also achievable by $g$. That is,
  \begin{align*}
    &\max_{\succ\in P}\sum_{s\in S}\sum_{c\in C}\max \left\{1 - g(\succ)(s, c) - \sum_{c' \in C : c' \succ_s c} g(\succ)(s, c') - \sum_{s' \in S : s' \succ_c s} g(\succ)(s', c), 0 \right\}\\
    &= \max_{\succ\in P}\sum_{s\in S}\sum_{c\in C}\max \Biggl\{\frac{1}{|A(\succ)|}\Biggl[\sum_{\pi_S\in \Pi_S,\pi_C\in \Pi_C}\biggl\{1- f\left(\left(\rho_S(\succ_{\pi_S(s)},\pi_C)\right)_{s\in S},\left(\rho_C(\succ_{\pi_C(c)},\pi_S)\right)_{c\in C}\right)\left(\pi_S(s),\pi_C(c)\right)\\ 
    &\quad -\sum_{c' \in C : c'\rho(\succ_{\pi_S(s)},\pi_C) \pi_C(c)}f\left(\left(\rho_S(\succ_{\pi_S(s)},\pi_C)\right)_{s\in S},\left(\rho_C(\succ_{\pi_C(c)},\pi_S)\right)_{c\in C}\right)\left(\pi_S(s),c'\right)\\
    &\quad -\sum_{s' \in S : s'\rho(\succ_{\pi_C(c)},\pi_S) \pi_S(s)}f\left(\left(\rho_S(\succ_{\pi_S(s)},\pi_C)\right)_{s\in S},\left(\rho_C(\succ_{\pi_C(c)},\pi_S)\right)_{c\in C}\right)\left(s',\pi_C(c)\right) \biggr\}\Biggr], 0 \Biggr\} \\
    &\leq \frac{1}{|A(\succ)|}\max_{\succ\in P}\sum_{s\in S}\sum_{c\in C}\sum_{\pi_S\in \Pi_S,\pi_C\in \Pi_C} \max \Biggl\{1- f\left(\left(\rho_S(\succ_{\pi_S(s)},\pi_C)\right)_{s\in S},\left(\rho_C(\succ_{\pi_C(c)},\pi_S)\right)_{c\in C}\right)\left(\pi_S(s),\pi_C(c)\right)\\ 
    &\quad -\sum_{c' \in C : c'\rho(\succ_{\pi_S(s)},\pi_C) \pi_C(c)}f\left(\left(\rho_S(\succ_{\pi_S(s)},\pi_C)\right)_{s\in S},\left(\rho_C(\succ_{\pi_C(c)},\pi_S)\right)_{c\in C}\right)\left(\pi_S(s),c'\right)\\
    &\quad -\sum_{s' \in S : s'\rho(\succ_{\pi_C(c)},\pi_S) \pi_S(s)}f\left(\left(\rho_S(\succ_{\pi_S(s)},\pi_C)\right)_{s\in S},\left(\rho_C(\succ_{\pi_C(c)},\pi_S)\right)_{c\in C}\right)\left(s',\pi_C(c)\right), 0 \Biggr\} \\
    &= \frac{1}{|A(\succ)|}\max_{\succ\in P}\sum_{\pi_S\in \Pi_S,\pi_C\in \Pi_C}\sum_{s\in S}\sum_{c\in C} \max \Biggl\{1- f\left(\left(\rho_S(\succ_{\pi_S(s)},\pi_C)\right)_{s\in S},\left(\rho_C(\succ_{\pi_C(c)},\pi_S)\right)_{c\in C}\right)\left(\pi_S(s),\pi_C(c)\right)\\ 
    &\quad -\sum_{c' \in C : c'\rho(\succ_{\pi_S(s)},\pi_C) \pi_C(c)}f\left(\left(\rho_S(\succ_{\pi_S(s)},\pi_C)\right)_{s\in S},\left(\rho_C(\succ_{\pi_C(c)},\pi_S)\right)_{c\in C}\right)\left(\pi_S(s),c'\right)\\
    &\quad -\sum_{s' \in S : s'\rho(\succ_{\pi_C(c)},\pi_S) \pi_S(s)}f\left(\left(\rho_S(\succ_{\pi_S(s)},\pi_C)\right)_{s\in S},\left(\rho_C(\succ_{\pi_C(c)},\pi_S)\right)_{c\in C}\right)\left(s',\pi_C(c)\right), 0 \Biggr\} \\
    &\leq \frac{1}{|A(\succ)|}\sum_{\pi_S\in \Pi_S,\pi_C\in \Pi_C}\max_{\succ\in P}\sum_{s\in S}\sum_{c\in C} \max \Biggl\{1- f\left(\left(\rho_S(\succ_{\pi_S(s)},\pi_C)\right)_{s\in S},\left(\rho_C(\succ_{\pi_C(c)},\pi_S)\right)_{c\in C}\right)\left(\pi_S(s),\pi_C(c)\right)\\ 
    &\quad -\sum_{c' \in C : c'\rho(\succ_{\pi_S(s)},\pi_C) \pi_C(c)}f\left(\left(\rho_S(\succ_{\pi_S(s)},\pi_C)\right)_{s\in S},\left(\rho_C(\succ_{\pi_C(c)},\pi_S)\right)_{c\in C}\right)\left(\pi_S(s),c'\right)\\
    &\quad -\sum_{s' \in S : s'\rho(\succ_{\pi_C(c)},\pi_S) \pi_S(s)}f\left(\left(\rho_S(\succ_{\pi_S(s)},\pi_C)\right)_{s\in S},\left(\rho_C(\succ_{\pi_C(c)},\pi_S)\right)_{c\in C}\right)\left(s',\pi_C(c)\right), 0 \Biggr\} \\
    &= \max_{\succ\in P}\sum_{s\in S}\sum_{c\in C}\max \left\{1 - f(\succ)(s, c) - \sum_{c' \in C : c' \succ_s c} f(\succ)(s, c') - \sum_{s' \in S : s' \succ_c s} f(\succ)(s', c), 0 \right\}
  \end{align*}
\end{proof}

\subsection{Proof of Theorem~\ref{thm2}}\label{app:thm2}
\begin{proof}
  Let $\succ\in P$ be given. Define a new preference profile $\succ'$ as follows. For each student $i\in\{1,\dots,n\}$ and for any two schools $j,j'\in\{1,\dots,n\}$, if in $\succ$ student $i$ prefers school $j'$ to school $j$, then in $\succ'$ we define that school $i$ prefers student $j'$ to student $j$. (A similar definition is applied when interchanging the roles of students and schools.)
  
  Let $f$ be a strategy-proof probabilistic matching mechanism. For any $\succ\in P$ and for any $i,j\in\{1,\dots,n\}$, define the mechanism $g$ by
  \begin{equation}
    g(\succ)(i,j) = \frac{1}{2}\Bigl(f(\succ)(i,j)+f(\succ')(j,i)\Bigr).
  \end{equation}
  That is, $g$ outputs the average matching probability over the two assignments given by $f(\succ)$ and by $f(\succ')$, where in the latter the roles of students and schools are interchanged. We now show that $g$ is both strategy-proof and symmetric, and that it attains the same objective function value as $f$.
  
  \textbf{(Symmetry)}\\
  We verify that $g$ satisfies symmetry. Observe that
  \begin{align*}
    g(\succ')(j,i) &= \frac{1}{2}\left(f(\succ')(j,i)+f(\succ)(i,j)\right)\\
    &= g(\succ)(i,j)
  \end{align*}
  Thus, $g$ is symmetric.
  
  \textbf{(Strategy-Proofness)}\\
  Suppose that under the true preference profile $\succ$ student $i$ submits a false preference $\triangleright_i$, and let $\triangleright$ denote the resulting profile. Next, define $\triangleright'$ in an analogous manner as follows: in $\triangleright$, for each student $i\in\{1,\dots,n\}$, if student $i$ prefers school $j'$ to school $j$, then in $\triangleright'$ we define that school $i$ prefers student $j'$ to student $j$ (and the same rule applies when interchanging the roles of students and schools). Then, under $\succ'$ if school $i$ receives the false preference $\triangleright_i$, the resulting profile coincides with $\triangleright$. Consequently, we have
  \begin{align*}
    &2\sum_{j'\succ_i j}\left(g(\succ)(i,j)-g(\triangleright)(i,j) \right) \\
    &= \sum_{j'\succ_i j}\left(f(\succ)(i,j) +f(\succ')(j,i) -f(\triangleright)(i,j) - f(\triangleright')(j,i)\right)\\
    &= \sum_{j'\succ_i j}\left(f(\succ)(i,j) - f(\triangleright)(i,j)\right) + \sum_{j'\succ_i j}\left(f(\succ')(j,i) - f(\triangleright')(j,i)\right)\\
    &\geq 0
  \end{align*}
  Thus, $g$ is strategy-proof.
  
  \textbf{(Achievablity)}\\
  We now show that the objective function value (with respect to both objective functions $A$ and $B$) achieved by $f$ is also achievable by $g$.
  
  First, consider objective function $A$. We have
  \begin{align*}
    &2\sum_{\succ\in P}\sum_{i}\sum_{j}\max \left\{1 - g(\succ)(i, j) - \sum_{j': j' \succ_i j} g(\succ)(i, j') - \sum_{i': i' \succ_j i} g(\succ)(i', j), 0 \right\}\\
    &= \sum_{\succ\in P}\sum_{i}\sum_{j}\max \Biggl\{1 - \left(f(\succ)(i, j)+f(\succ')(j,i)\right) \\
    &\quad - \sum_{j': j' \succ_i j} \left(f(\succ)(i, j')+f(\succ')(j',i)\right)- \sum_{i': i' \succ_j i} \left(f(\succ)(i', j)+f(\succ')(j,i')\right), 0 \Biggr\}\\
    &\leq \sum_{\succ\in P}\sum_{i}\sum_{j}\Biggl[\max\Bigl\{1 - f(\succ)(i, j)- \sum_{j': j' \succ_i j}f(\succ)(i, j')-\sum_{i': i' \succ_j i}f(\succ)(i', j),0\Bigr\}\\
    &\quad + \max\Bigl\{1 - f(\succ')(j, i)- \sum_{j': j' \succ_i j}f(\succ')(j', i)-\sum_{i': i' \succ_j i}f(\succ')(j, i'),0\Bigr\} \Biggr]\\
    &= \sum_{\succ\in P}\sum_{i}\sum_{j}\Biggl[\max\Bigl\{1 - f(\succ)(i, j)- \sum_{j': j' \succ_i j}f(\succ)(i, j')-\sum_{i': i' \succ_j i}f(\succ)(i', j),0\Bigr\}\\
    &\quad + \max\Bigl\{1 - f(\succ')(j, i)- \sum_{j': j' \succ'_i j}f(\succ')(j', i)-\sum_{i': i' \succ'_j i}f(\succ')(j, i'),0\Bigr\} \Biggr]\\
    &= 2\sum_{\succ\in P}\sum_{i}\sum_{j}\max \left\{1 - f(\succ)(i, j) - \sum_{j': j' \succ_i j} f(\succ)(i, j') - \sum_{i': i' \succ_j i} f(\succ)(i', j), 0 \right\}
  \end{align*}
  The last equality follows from the fact that $\succ'$ corresponds one-to-one with $\succ$. Hence, the objective function value of $f$ is achievable by $g$.
  
  Next, we show that the worst-case objective function value corresponding to objective function $B$ that is achievable by $f$ is also achievable by $g$. In particular, we have
  \begin{align*}
    &2\max_{\succ\in P}\sum_{i}\sum_{j}\max \left\{1 - g(\succ)(i, j) - \sum_{j': j' \succ_i j} g(\succ)(i, j') - \sum_{i': i' \succ_j i} g(\succ)(i', j), 0 \right\}\\
    &= \sum_{\succ\in P}\sum_{i}\sum_{j}\max \Biggl\{1 - \left(f(\succ)(i, j)+f(\succ')(j,i)\right) \\
    &\quad - \sum_{j': j' \succ_i j} \left(f(\succ)(i, j')+f(\succ')(j',i)\right)- \sum_{i': i' \succ_j i} \left(f(\succ)(i', j)+f(\succ')(j,i')\right), 0 \Biggr\}\\
    &\leq \max{\succ\in P}\sum_{i}\sum_{j}\Biggl[\max\Bigl\{1 - f(\succ)(i, j)- \sum_{j': j' \succ_i j}f(\succ)(i, j')-\sum_{i': i' \succ_j i}f(\succ)(i', j),0\Bigr\}\\
    &\quad + \max\Bigl\{1 - f(\succ')(j, i)- \sum_{j': j' \succ_i j}f(\succ')(j', i)-\sum_{i': i' \succ_j i}f(\succ')(j, i'),0\Bigr\} \Biggr]\\
    &= \max_{\succ\in P}\sum_{i}\sum_{j}\Biggl[\max\Bigl\{1 - f(\succ)(i, j)- \sum_{j': j' \succ_i j}f(\succ)(i, j')-\sum_{i': i' \succ_j i}f(\succ)(i', j),0\Bigr\}\\
    &\quad + \max\Bigl\{1 - f(\succ')(j, i)- \sum_{j': j' \succ'_i j}f(\succ')(j', i)-\sum_{i': i' \succ'_j i}f(\succ')(j, i'),0\Bigr\} \Biggr]\\
    &\leq 2\max_{\succ\in P}\sum_{i}\sum_{j}\max \left\{1 - f(\succ)(i, j) - \sum_{j': j' \succ_i j} f(\succ)(i, j') - \sum_{i': i' \succ_j i} f(\succ)(i', j), 0 \right\}
  \end{align*}
  Thus, the worst-case objective function value (i.e., the value of objective function $B$) achieved by $f$ is also achievable by $g$.
\end{proof}

\nocite{*}
\bibliography{bib}

@ARTICLE{Bade2023-ou,
  title         = "Royal processions: Incentives, efficiency and fairness in
                   two-sided matching",
  author        = "Bade, Sophie and Root, Joseph",
  journal       = "arXiv [econ.TH]",
  abstract      = "We study the set of incentive compatible and efficient
                   two-sided matching mechanisms. We classify all such
                   mechanisms under an additional assumption --
                   ``gender-neutrality'' -- which guarantees that the two sides
                   be treated symmetrically. All group strategy-proof, efficient
                   and gender-neutral mechanisms are recursive and the outcome
                   is decided in a sequence of rounds. In each round, two agents
                   are selected, one from each side. These agents are either
                   ``matched-by-default'' or ``unmatched-by-default.'' In the
                   former case either of the selected agents can unilaterally
                   force the other to match with them while in the latter case
                   they may only match together if both agree. In either case,
                   if this pair of agents is not matched together, each gets
                   their top choice among the set of remaining agents. As an
                   important step in the characterization, we first show that in
                   one-sided matching all group strategy-proof and efficient
                   mechanisms are sequential dictatorships. An immediate
                   corollary is that there are no individually rational, group
                   strategy-proof and efficient one-sided matching mechanisms.",
  month         =  jan,
  year          =  2023,
  archivePrefix = "arXiv",
  primaryClass  = "econ.TH"
}

@ARTICLE{Roth1993-yk,
  title     = "Stable matchings, optimal assignments, and linear programming",
  author    = "Roth, Alvin E and Rothblum, Uriel G and Vande Vate, John H",
  journal   = "Math. Oper. Res.",
  publisher = "Institute for Operations Research and the Management Sciences
               (INFORMS)",
  volume    =  18,
  number    =  4,
  pages     = "803--828",
  abstract  = "Vande Vate (1989) described the polytope whose extreme points are
               the stable (core) matchings in the Marriage Problem. Rothblum
               (1989) simplified and extended this result. This paper explores a
               corresponding linear program, its dual and consequences of the
               fact that the dual solutions have an unusually direct relation to
               the primal solutions. This close relationship allows us to
               provide simple proofs both of Vande Vate and Rothblum's results
               and of other important results about the core of marriage
               markets. These proofs help explain the structure shared by the
               marriage problem (without sidepayments) and the assignment game
               (with sidepayments). The paper further explores “fractional”
               matchings, which may be interpreted as lotteries over possible
               matches or as time-sharing arrangements. We show that those
               fractional matchings in the Stable Marriage Polytope form a
               lattice with respect to a partial ordering that involves
               stochastic dominance. Thus, all expected utility functions
               corresponding to the same ordinal preferences will agree on the
               relevant comparisons. Finally, we provide linear programming
               proofs of slightly stronger versions of known incentive
               compatibility results.",
  month     =  nov,
  year      =  1993,
  language  = "en"
}

@ARTICLE{Abdulkadiroglu2003-ve,
  title     = "School choice: A mechanism design approach",
  author    = "Abdulkadiroğlu, Atila and Sönmez, Tayfun",
  journal   = "Am. Econ. Rev.",
  publisher = "American Economic Association",
  volume    =  93,
  number    =  3,
  pages     = "729--747",
  abstract  = "A central issue in school choice is the design of a student
               assignment mechanism. Education literature provides guidance for
               the design of such mechanisms but does not offer specific
               mechanisms. The flaws in the existing school choice plans result
               in appeals by unsatisfied parents. We formulate the school choice
               problem as a mechanism design problem and analyze some of the
               existing school choice plans including those in Boston, Columbus,
               Minneapolis, and Seattle. We show that these existing plans have
               serious shortcomings, and offer two alternative mechanisms each
               of which may provide a practical solution to some critical school
               choice issues.",
  month     =  may,
  year      =  2003
}

@ARTICLE{Ashlagi2018-mu,
  title     = "Stable matching mechanisms are not obviously strategy-proof",
  author    = "Ashlagi, Itai and Gonczarowski, Yannai A",
  journal   = "J. Econ. Theory",
  publisher = "Elsevier BV",
  volume    =  177,
  pages     = "405--425",
  abstract  = "Many two-sided matching markets, from labor markets to school
               choice programs, use a clearinghouse based on the
               applicant-proposing deferred acceptance algorithm, which is well
               known to be strategy-proof for the applicants. Nonetheless, a
               growing amount of empirical evidence reveals that applicants
               misrepresent their preferences when this mechanism is used. This
               paper shows that no mechanism that implements a stable matching
               is obviously strategy-proof for any side of the market, a
               stronger incentive property than strategy-proofness that was
               introduced by Li (2017). A stable mechanism that is obviously
               strategy-proof for applicants is introduced for the case in which
               agents on the other side have acyclical preferences.",
  month     =  sep,
  year      =  2018,
  language  = "en"
}

@ARTICLE{Kumano2013-pl,
  title     = "Strategy-proofness and stability of the Boston mechanism: An
               almost impossibility result",
  author    = "Kumano, Taro",
  journal   = "J. Public Econ.",
  publisher = "Elsevier BV",
  volume    =  105,
  pages     = "23--29",
  abstract  = "Public school systems generally use one of the three competing
               mechanisms – the Boston mechanism, the deferred acceptance
               mechanism and the top trading cycle mechanism – for assigning
               students to specific schools. Although the literature generally
               claims that the Boston mechanism is Pareto efficient but neither
               stable nor strategy-proof, this study delineates a subset of
               school priority structures for which it fulfills all three
               criteria. We show that the Boston mechanism is stable if and only
               if it is strategy-proof if and only if the priority structure is
               strongly acyclic. However, we find that the condition of strong
               acyclicity is nearly impossible to satisfy: any priority
               structure is quasi-cyclic whenever there are two schools whose
               admission quotas are less than the number of students seeking
               admission.",
  month     =  sep,
  year      =  2013,
  language  = "en"
}

@ARTICLE{Gale1962-mm,
  title     = "College admissions and the stability of marriage",
  author    = "Gale, D and Shapley, L S",
  journal   = "Am. Math. Mon.",
  publisher = "JSTOR",
  volume    =  69,
  number    =  1,
  pages     =  9,
  month     =  jan,
  year      =  1962
}

@ARTICLE{Abdulkadiroglu1998-er,
  title     = "Random serial dictatorship and the core from random endowments in
               house allocation problems",
  author    = "Abdulkadiroglu, Atila and Sonmez, Tayfun",
  journal   = "Econometrica",
  publisher = "JSTOR",
  volume    =  66,
  number    =  3,
  pages     =  689,
  month     =  may,
  year      =  1998
}

@INCOLLECTION{Gale2014-ea,
  title     = "College admissions and the stability of marriage",
  author    = "Gale, David and Shapley, Lloyd S",
  editor    = "Pitici, Mircea",
  booktitle = "The Best Writing on Mathematics 2014",
  publisher = "Princeton University Press",
  address   = "Princeton",
  pages     = "299--307",
  month     =  dec,
  year      =  2014
}

@ARTICLE{Li2019-on,
  title     = "Deep Learning for Two-Sided Matching Markets",
  author    = "Li, S",
  publisher = "dash.harvard.edu",
  year      =  2019
}

@ARTICLE{Ravindranath2021-ew,
  title         = "Deep learning for two-sided matching",
  author        = "Ravindranath, Sai Srivatsa and Feng, Zhe and Li, Shira and
                   Ma, Jonathan and Kominers, Scott D and Parkes, David C",
  journal       = "arXiv [cs.GT]",
  abstract      = "We initiate the study of deep learning for the automated
                   design of two-sided matching mechanisms. What is of most
                   interest is to use machine learning to understand the
                   possibility of new tradeoffs between strategy-proofness and
                   stability. These properties cannot be achieved
                   simultaneously, but the efficient frontier is not understood.
                   We introduce novel differentiable surrogates for quantifying
                   ordinal strategy-proofness and stability and use them to
                   train differentiable matching mechanisms that map discrete
                   preferences to valid randomized matchings. We demonstrate
                   that the efficient frontier characterized by these learned
                   mechanisms is substantially better than that achievable
                   through a convex combination of baselines of deferred
                   acceptance (stable and strategy-proof for only one side of
                   the market), top trading cycles (strategy-proof for one side,
                   but not stable), and randomized serial dictatorship
                   (strategy-proof for both sides, but not stable). This gives a
                   new target for economic theory and opens up new possibilities
                   for machine learning pipelines in matching market design.",
  month         =  jul,
  year          =  2021,
  archivePrefix = "arXiv",
  primaryClass  = "cs.GT"
}

@ARTICLE{Roth1982-cl,
  title     = "The economics of matching: Stability and incentives",
  author    = "Roth, Alvin E",
  journal   = "Math. Oper. Res.",
  publisher = "Institute for Operations Research and the Management Sciences
               (INFORMS)",
  volume    =  7,
  number    =  4,
  pages     = "617--628",
  abstract  = "This paper considers some game-theoretic aspects of matching
               problems and procedures, of the sort which involve matching the
               members of one group of agents with one or more members of a
               second, disjoint group of agents, ail of whom have preferences
               over the possible resulting matches. The main focus of this paper
               is on determining the extent to which matching procedures can be
               designed which give agents the incentive to honestly reveal their
               preferences, and which produce stable matches. Two principal
               results are demonstrated. The first is that no matching procedure
               exists which always yields a stable outcome and gives players the
               incentive to reveal their true preferences, even though
               procedures exist which accomplish either of these goals
               separately. The second result is that matching procedures do
               exist, however, which always yield a stable outcome and which
               always give all the agents in one of the two disjoint sets of
               agents the incentive to reveal their true preferences.",
  month     =  nov,
  year      =  1982,
  language  = "en"
}

@ARTICLE{Aziz2019-ms,
  title     = "Random matching under priorities: stability and no envy concepts",
  author    = "Aziz, Haris and Klaus, Bettina",
  journal   = "Social Choice and Welfare",
  publisher = "Springer",
  volume    =  53,
  number    =  2,
  pages     = "213--259",
  year      =  2019
}

@ARTICLE{Von-Neumann1953-lh,
  title   = "A certain zero-sum two-person game equivalent to the optimal
             assignment problem",
  author  = "Von Neumann, John",
  journal = "Contributions to the Theory of Games",
  volume  =  2,
  number  =  0,
  pages   = "5--12",
  year    =  1953
}

@ARTICLE{Birkhoff1946-wt,
  title   = "Three observations on linear algebra",
  author  = "Birkhoff, Garrett",
  journal = "Univ. Nac. Tacuman, Rev. Ser. A",
  volume  =  5,
  pages   = "147--151",
  year    =  1946
}
\end{document}